\definecolor{DarkGray}{rgb}{0.1,0.1,0.5}
\newcommand{\bra}[1]{{\langle#1|}}
\newcommand{\ket}[1]{{|#1\rangle}}
\newcommand{\ketbra}[2]{{\ket{#1}\!\bra{#2}}}
\newcommand{\abs}[1]{{\lvert #1\rvert}}	% since the delimiters do not scale, it might be a good idea to add a dummy {} at the end, so \abs{big expression}^2 has the superscript at a low height
\newcommand{\identity}{\ensuremath{\boldsymbol{1}}} %\mathbb{I}
\def\CZ {C\!Z}	% control-Z gate
\newcounter{sprows}
\newlength{\spheight}
\newlength{\spraise}
\newcommand{\comment}[1]{\emph{\color{blue}Comment:\color{black} #1}} % use for simply removing comments
\newlength{\commentslength}
\newcommand{\comments}[1]{
\hspace{-2\parindent}
\addtolength{\commentslength}{-\commentslength}
\addtolength{\commentslength}{\linewidth}
\addtolength{\commentslength}{-\parindent}
\fcolorbox{blue}{white}{\smallskip\begin{minipage}[c]{\commentslength}
\emph{Comments:}\begin{itemize}#1\end{itemize}\end{minipage}}\bigskip
}
\newcommand{\rem}[1]{}
\newtheorem{theorem}{Theorem}%[section]
\newtheorem{claim}[theorem]{Claim}
\newfont{\subsubsecfnt}{ptmri8t at 11pt}
\renewcommand{\subparagraph}[1]{\smallskip{\subsubsecfnt #1.}}
\newcommand{\eqnref}[1]{\hyperref[#1]{{(\ref*{#1})}}}
\newcommand{\thmref}[1]{\hyperref[#1]{{Theorem~\ref*{#1}}}}
\newcommand{\lemref}[1]{\hyperref[#1]{{Lemma~\ref*{#1}}}}
\newcommand{\corref}[1]{\hyperref[#1]{{Corollary~\ref*{#1}}}}
\newcommand{\defref}[1]{\hyperref[#1]{{Definition~\ref*{#1}}}}
\newcommand{\secref}[1]{\hyperref[#1]{{Sec.~\ref*{#1}}}}
\newcommand{\figref}[1]{\hyperref[#1]{{Fig.~\ref*{#1}}}}  % for revtex use Fig. except at the start of a sentence
\newcommand{\tabref}[1]{\hyperref[#1]{{Table~\ref*{#1}}}}
\newcommand{\remref}[1]{\hyperref[#1]{{Remark~\ref*{#1}}}}
\newcommand{\appref}[1]{\hyperref[#1]{{Appendix~\ref*{#1}}}}
\newcommand{\claimref}[1]{\hyperref[#1]{{Claim~\ref*{#1}}}}
\newcommand{\factref}[1]{\hyperref[#1]{{Fact~\ref*{#1}}}}
\newcommand{\propref}[1]{\hyperref[#1]{{Proposition~\ref*{#1}}}}
\newcommand{\exampleref}[1]{\hyperref[#1]{{Example~\ref*{#1}}}}
\newcommand{\conjref}[1]{\hyperref[#1]{{Conjecture~\ref*{#1}}}}
\def\COLOR{}
\definecolor{Cayenne}{rgb}{0.5,0,0}
\definecolor{Midnight}{rgb}{0,0,0.5}
\definecolor{Plum}{rgb}{0.5,0,0.5}
\definecolor{Teal}{rgb}{0,0.5,0.5}
\definecolor{Clover}{rgb}{0,0.5,0}
\definecolor{Maroon}{rgb}{0.5,0,0.25}
\definecolor{Ocean}{rgb}{0,0.25,0.5}
\definecolor{Tangerine}{rgb}{1,0.5,0}
\definecolor{Strawberry}{rgb}{1,0,0.5}
\definecolor{Fern}{rgb}{0.25,0.5,0}
\definecolor{Aqua}{rgb}{0,0.5,1}
\definecolor{Moss}{rgb}{0,0.5,0.25}
\definecolor{Mocha}{rgb}{0.5,0.25,0}
\definecolor{Lemon}{rgb}{1,1,0}
\definecolor{Asparagus}{rgb}{0.5,0.5,0}
\definecolor{Grape}{rgb}{0.5,0,1}
\definecolor{Iron}{rgb}{.3,.3,.3}
\definecolor{Steel}{rgb}{.4,.4,.4}
\let\save@mathaccent\mathaccent
\newcommand*\if@single[3]{%
  \setbox0\hbox{${\mathaccent"0362{#1}}^H$}%
  \setbox2\hbox{${\mathaccent"0362{\kern0pt#1}}^H$}%
  \ifdim\ht0=\ht2 #3\else #2\fi
  }
\newcommand*\rel@kern[1]{\kern#1\dimexpr\macc@kerna}
\newcommand*\widebar[1]{\@ifnextchar^{{\wide@bar{#1}{0}}}{\wide@bar{#1}{1}}}
\newcommand*\wide@bar[2]{\if@single{#1}{\wide@bar@{#1}{#2}{1}}{\wide@bar@{#1}{#2}{2}}}
\newcommand*\wide@bar@[3]{%
  \begingroup
  \def\mathaccent##1##2{%
%Enable nesting of accents:
    \let\mathaccent\save@mathaccent
%If there's more than a single symbol, use the first character instead (see below):
    \if#32 \let\macc@nucleus\first@char \fi
%Determine the italic correction:
    \setbox\z@\hbox{$\macc@style{\macc@nucleus}_{}$}%
    \setbox\tw@\hbox{$\macc@style{\macc@nucleus}{}_{}$}%
    \dimen@\wd\tw@
    \advance\dimen@-\wd\z@
%Now \dimen@ is the italic correction of the symbol.
    \divide\dimen@ 3
    \@tempdima\wd\tw@
    \advance\@tempdima-\scriptspace
%Now \@tempdima is the width of the symbol.
    \divide\@tempdima 10
    \advance\dimen@-\@tempdima
%Now \dimen@ = (italic correction / 3) - (Breite / 10)
    \ifdim\dimen@>\z@ \dimen@0pt\fi
%The bar will be shortened in the case \dimen@<0 !
    \rel@kern{0.6}\kern-\dimen@
    \if#31
      \overline{\rel@kern{-0.6}\kern\dimen@\macc@nucleus\rel@kern{0.4}\kern\dimen@}%
      \advance\dimen@0.4\dimexpr\macc@kerna
%Place the combined final kern (-\dimen@) if it is >0 or if a superscript follows:
      \let\final@kern#2%
      \ifdim\dimen@<\z@ \let\final@kern1\fi
      \if\final@kern1 \kern-\dimen@\fi
    \else
      \overline{\rel@kern{-0.6}\kern\dimen@#1}%
    \fi
  }%
  \macc@depth\@ne
  \let\math@bgroup\@empty \let\math@egroup\macc@set@skewchar
  \mathsurround\z@ \frozen@everymath{\mathgroup\macc@group\relax}%
  \macc@set@skewchar\relax
  \let\mathaccentV\macc@nested@a
%The following initialises \macc@kerna and calls \mathaccent:
  \if#31
    \macc@nested@a\relax111{#1}%
  \else
%If the argument consists of more than one symbol, and if the first token is
%a letter, use that letter for the computations:
    \def\gobble@till@marker##1\endmarker{}%
    \futurelet\first@char\gobble@till@marker#1\endmarker
    \ifcat\noexpand\first@char A\else
      \def\first@char{}%
    \fi
    \macc@nested@a\relax111{\first@char}%
  \fi
  \endgroup
}
\def\llbracket{{[\![}}
\def\rrbracket{{]\!]}}
\DeclareMathOperator{\GL}{\operatorname{GL}}
\renewcommand{\CZ}{{\ensuremath{\mathrm{C}Z}}}
\newcommand{\CCZ}{{\ensuremath{\mathrm{CC}Z}}}
\def\beq{\begin{equation}}
\def\eeq{\end{equation}}
\renewcommand{\comment}[1]{}
\renewcommand{\comments}[1]{}
\begin{document}
\def\compilefullpaper{}

\title{Fault-tolerant quantum computation with few qubits}
\author{Rui Chao}
\author{Ben W. Reichardt}
\affiliation{University of Southern California}

\begin{abstract}
Reliable qubits are difficult to engineer, but standard fault-tolerance schemes use seven or more physical qubits to encode each logical qubit, with still more qubits required for error correction.  The large overhead makes it hard to experiment with fault-tolerance schemes with multiple encoded~qubits.  

The $15$-qubit Hamming code protects seven encoded qubits to distance three.  We give fault-tolerant procedures for applying arbitrary Clifford operations on these encoded qubits, using only two extra qubits, $17$ total.  In particular, individual encoded qubits within the code block can be targeted.  Fault-tolerant universal computation is possible with four extra qubits, $19$ total.  The procedures could enable testing more sophisticated protected circuits in small-scale quantum devices.  

Our main technique is to use gadgets to protect gates against correlated faults.  We also take advantage of special code symmetries, and use pieceable fault tolerance.  
\end{abstract}

\maketitle

\section{Introduction}

Quantum computers are faulty, but schemes to tolerate errors incur a large space overhead.  For example, one qubit encodes into seven physical qubits using the Steane code~\cite{Steane96css}, or into nine physical qubits using the Bacon-Shor and smallest surface codes~\cite{Bacon05operator, BombinMartindelgado07surfaceoverhead, HorsmanFowlerDevittVanmeter11surfacesurgery}.  Error correction uses additional qubits.  When more than one level of encoding is required for better protection, the overhead multiplies, so that thousands of physical qubits can be required for each logical qubit~\cite{Suchara13overhead}.  This overhead will compound the challenge of building large quantum computers.  In the near term, it also makes it more difficult to run fault-tolerance experiments, which are important to test different schemes' performance, validate models and learn better approaches.  

Codes storing multiple qubits have higher rates~\cite{CalderbankShor96}, but too large codes tend to tolerate less noise since initializing codewords gets difficult~\cite{Steane03}.  A key obstacle for using any code with multiple qubits per code block is that it is complicated and inefficient to address the individual encoded qubits to compute on them~\cite{Gottesman97, SteaneIbinson03multiqubitcodes}.  For example, to apply a CNOT gate between two logical qubits in a code block, the optimized method in~\cite{SteaneIbinson03multiqubitcodes} requires a full ancillary code block, with no stored data, into which the target logical qubit is transferred temporarily.  

We introduce lower-overhead methods for computing fault tolerantly on multiple data qubits in codes of distance two or three.  

\begin{enumerate}[leftmargin=*]
\item 
For even $n$, the $\llbracket n, n-2, 2 \rrbracket$ code encodes $n-2$ logical qubits into $n$ physical qubits, protected to distance two.  
We show that with two more qubits, encoded CNOT and Hadamard gates can be applied fault tolerantly.  For $n \geq 6$, four extra qubits suffice to fault-tolerantly apply an encoded $\CCZ$ gate, for universality.  
\item 
For better, distance-three protection, we encode seven qubits into $15$, and give fault-tolerant circuits for the encoded Clifford group using two more qubits, and for a universal gate set with four extra qubits ($19$ total).  
\end{enumerate}
Combined with the two-qubit fault-tolerant error-detection and error-correction methods in~\cite{ChaoReichardt17errorcorrection}, this means that substantial quantum calculations can be implemented fault tolerantly in a quantum device with fewer than $20$ physical qubits.  Figure~\ref{f:computationqubits} summarizes our results.  

\begin{figure}[b]
\centering
\includegraphics[scale=.455]{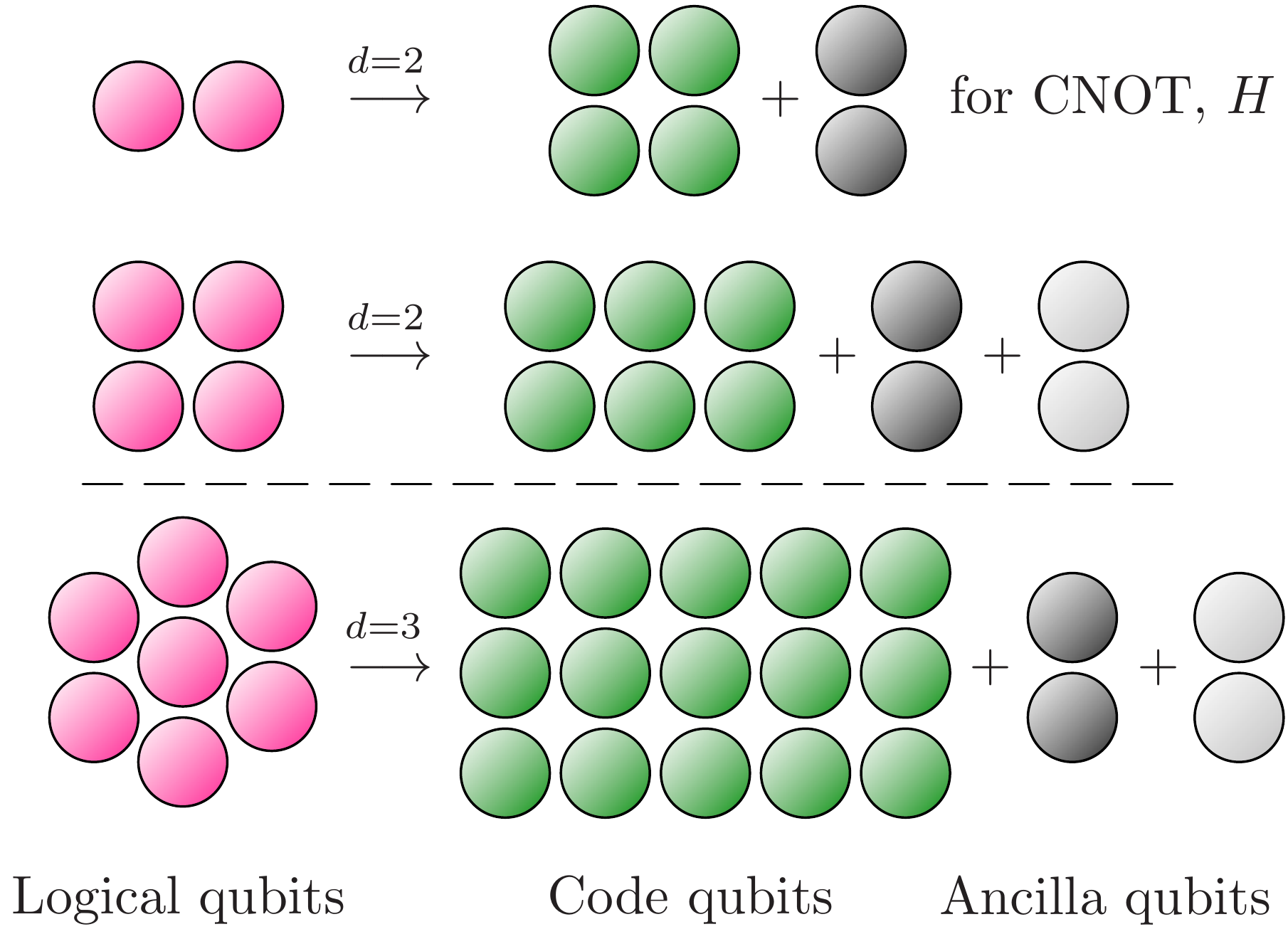}
\caption{Summary of our constructions.  Using two extra qubits, one can apply fault tolerantly either encoded CNOT and Hadamard gates or the full Clifford group.  Four extra qubits are enough for fault-tolerant universal computation.} \label{f:computationqubits}
\end{figure}

In order to compute on data encoded within a single code block, we need to apply two- or three-qubit gates.  The particular circuits use symmetries of the codes or a more general round-robin construction from~\cite{YoderTakagiChuang16pieceableft}.  This is not fault tolerant, because a single gate failure can cause a correlated error of weight two or worse, which a distance-three code cannot correct.  To fix this, we replace each gate with a gadget involving two to four more qubits.  With no gate faults, the gadgets are equivalent to the ideal gates they replace.  The gadgets' purpose is to detect correlated errors, so that they can be corrected for later.  The gadgets cannot prevent the gates from spreading single-qubit faults into problematic multi-qubit errors.  To avoid this problem, we design the circuits carefully, and in some cases intersperse partial error correction procedures between gadgets, an idea from~\cite{KnillLaflammeZurek96} recently applied and extended by~\cite{HillFowlerWangHollenberg11codeconversion, YoderTakagiChuang16pieceableft}.  Sometimes error correction even needs to overlap the gadgets.  

For the basics of stabilizer algebra, quantum error-correcting codes and fault-tolerant quantum computation, we refer the reader to~\cite{Gottesman09faulttolerance}.

\section{Trick: Flagging a gate for correlated faults}

A main trick we use is to replace $\CZ$ and $\CCZ$ gates with small gadgets that can catch correlated faults.  The gadgets are reminiscent of one-ancilla-qubit fault-tolerant SWAP gate gadgets~\cite{Gottesman00local}.

\subsection{$\CZ$ gadget} \label{s:czflags}

The controlled-phase gate is a two-qubit diagonal gate $\CZ = \identity - 2 \ketbra{11}{11}$, represented in circuits as 
\begin{equation*}
\raisebox{-.2cm}{\includegraphics[scale=.769]{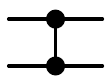}}
\end{equation*}
Observe that $\CZ$ gates commute with $Z$ errors, but copy $X$ (or $Y$) errors on one wire into $Z$ errors on the other: 
\begin{equation}
\raisebox{-.3cm}{
\raisebox{.135cm}{\includegraphics[scale=.769]{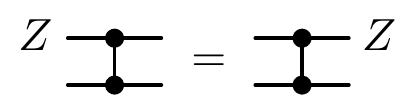}}
\quad
\includegraphics[scale=.769]{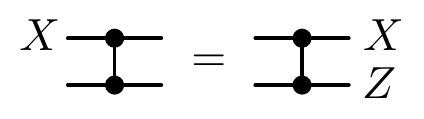}
}
\end{equation}

For fault tolerance, it suffices to study gates that fail with Pauli faults after the gate.  That is, when a noisy $\CZ$ gate fails, it applies the ideal $\CZ$ gate followed by one of the $15$ nontrivial two-qubit Pauli operators.  

\medskip

Consider the circuit of \figref{f:czxflag}, using one extra qubit that at the end is measured in the $\ket 0, \ket 1$ basis ($Z$ eigenbasis).  If the gates are perfect, then the measurement returns $0$ and this circuit has the effect of a single $\CZ$ gate.  If the $\CZ$ gate fails with an $X$ or $Y$ fault on the second qubit, however, then the measurement will return~$1$.  Thus certain kinds of faults can be detected.  If at most one location fails and the measurement returns~$0$, then the output cannot have an $XX$, $XY$, $YX$ or $YY$ error.  

\begin{figure}
\centering
\begin{tabular}{c@{$\quad\qquad$}c}
\subfigure[\label{f:czxflag}]{\raisebox{-.6cm}{\includegraphics[scale=.769]{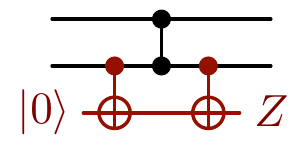}}} & 
\subfigure[\label{f:czzflag}]{\raisebox{-.6cm}{\includegraphics[scale=.769]{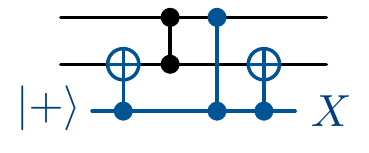}}} \\
\multicolumn{2}{c}{
\subfigure[\label{f:czflags}]{\raisebox{-.8cm}{\includegraphics[scale=.769]{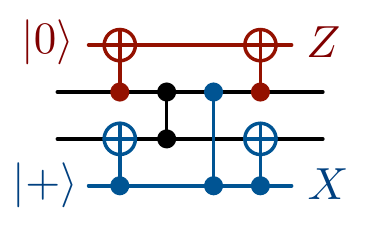}}}
}
\end{tabular}
\caption{$\CZ$ gadgets to catch correlated faults.  (a) An extra qubit can be used to catch $XX$, $XY$, $YX$ and $YY$ faults after the $\CZ$ gate.  (b) A similar circuit catches $ZZ$ faults.  (c) In combination, these gadgets can catch all two-qubit correlated faults (\thmref{t:czflags}).  
}
\end{figure}

A similar circuit can catch $Z$ faults.  If all gates in \figref{f:czzflag} are perfect, then the $X$ basis ($\ket +, \ket -$) measurement will return $+$ and the effect will be of a single $\CZ$.  If there is at most one fault and the measurement returns~$+$, then the output cannot have a $YX$ or $ZZ$ error.  (This fact can be verified by, for example, propagating $ZZI$ and $ZZX$ backward through the circuit, and observing that no single gate failure can create either.)  

The gadgets to catch $X$ and $Z$ faults can be combined: 

\begin{theorem} \label{t:czflags}
With no faults, the circuit of \figref{f:czflags} implements a $\CZ$ gate, with the measurements outputting $0$ and~$+$.  If there is at most one fault and the measurements return $0$ and~$+$, then neither $XX$, $XY$, $YX$, $YY$ nor $ZZ$ errors can occur on the output.  
\end{theorem}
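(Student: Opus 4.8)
The plan is to work entirely in the Pauli-propagation picture, using the reduction stated in the excerpt that a faulty Clifford gate acts as the ideal gate followed by a Pauli, and that preparation and measurement faults are likewise Pauli (or classical bit-flip) errors at their location. The argument splits into the no-fault case and the single-fault case. For the no-fault case I would propagate the initial state through \figref{f:czflags}. Since the two flag mechanisms of \figref{f:czxflag} and \figref{f:czzflag} have each already been checked to implement a bare $\CZ$ with a deterministic flag outcome ($0$ in the $Z$ basis, $+$ in the $X$ basis), it suffices to confirm that interleaving them around the single shared $\CZ$ disturbs neither: the ideal evolution still returns both flag qubits to their starting eigenstates and leaves $\CZ$ acting on the data. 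This is a short Heisenberg-picture computation tracking the two flag observables.

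The substance is the single-fault case, which I would first recast algebraically. Let $E$ be the Pauli error on (data $+$ flags) produced at the output by a single faulty location, obtained by propagating that location's Pauli fault forward through the remaining Clifford gates; since every gate is Clifford, $E$ is again a Pauli, so the analysis is finite and exact. The two measurements both return their good values exactly when $E$ commutes with the two flag observables, i.e. the flag-$1$ component of $E$ lies in $\{\identity, Z\}$ and the flag-$2$ component lies in $\{\identity, X\}$; the forbidden event is that the data component of $E$ lies in $\{XX, XY, YX, YY, ZZ\}$, with signs and the $Y=iXZ$ phase irrelevant for error correction. Thus the theorem is equivalent to the claim that no single-location fault produces such an $E$.

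Following the hint given for \figref{f:czzflag}, I would verify this by backward propagation rather than enumerating all forward faults. There are exactly $5 \times 2 \times 2 = 20$ ``bad-and-passing'' output operators: one of the five forbidden data Paulis, dressed on flag~$1$ by $\identity$ or $Z$ and on flag~$2$ by $\identity$ or $X$. A fault $P$ inserted just after a gate $G_i$ yields output error $E = U_{>i}\,P\,U_{>i}^{\dagger}$, where $U_{>i}$ is the product of the subsequent gates; so $E$ equals a target $T$ iff $P = U_{>i}^{\dagger}\,T\,U_{>i}$, and this is an admissible single-location fault only when it is supported on the qubit(s) that $G_i$ touches. I would therefore propagate each of the $20$ operators backward through \figref{f:czflags} and, at every location, check whether the back-propagated operator is supported within that location. (A measurement location can only flip its own outcome and creates no data error, so it is automatically harmless; preparation locations, which do propagate forward into the data, must be included.) Showing that none of the $20$ operators matches an admissible single-location fault anywhere establishes the theorem.

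The main obstacle is the shared $\CZ$ gate together with the cross-error cases, which is exactly why the combined statement is not a formal corollary of the two separate sub-claims. The $\CZ$ is common to both flag mechanisms, so a fault there is the one location that can simultaneously threaten the $X$-type and $Z$-type classes, and it is here that the two flags must jointly cover all of $\{XX, XY, YX, YY\}$ (via the $Z$-basis flag, using that $\CZ$ copies an $X$ or $Y$ into a $Z$ on the partner wire) and $ZZ$ (via the $X$-basis flag). The genuinely new work is the cross cases: a fault inside the $X$-catching circuitry that happens to propagate to a $ZZ$ data error, or a fault inside the $Z$-catching circuitry that propagates to an $XX$, $XY$, $YX$ or $YY$ data error. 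For each I must confirm that the error either cannot arise from a single fault or necessarily anticommutes with one of the flag observables, so that the back-propagated operator lands on a flag component outside $\{\identity, Z\}$ or outside $\{\identity, X\}$ and the corresponding measurement flips. Once organized this way the remaining check is routine finite bookkeeping; the only care needed is the consistent treatment of the shared $\CZ$ location for both flags and the $Y=iXZ$ phase tracking.
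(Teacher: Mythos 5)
Your proposal is correct and takes essentially the same route as the paper: the paper never writes out a formal proof of \thmref{t:czflags}, instead establishing the two sub-gadget claims by finite enumeration of single Pauli faults (explicitly suggesting backward propagation of bad operators for the $Z$-flag case of \figref{f:czzflag}) and asserting the combined statement on the strength of the same mechanical check, which is precisely what you organize via backward propagation of the $20$ bad-and-passing output operators. Your insistence on separately checking the cross cases and the shared $\CZ$ location is well placed---it is exactly why the paper warns (\figref{f:czzflagwrongorderczflagswrongorder}) that reordering or re-nesting the two flags destroys fault tolerance---but it is the same finite verification, not a different method.
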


Thus all single faults are caught except those equivalent to a fault on the $\CZ$ output or input qubits (namely, $IX, IY, IZ, XI, YI, ZI$ and $XZ, YZ, ZX, ZY$).  No $\CZ$ gadget can catch more errors than this.  

\begin{figure}
\centering
\begin{tabular}{c@{$\quad\qquad$}c}
\subfigure[\label{f:czzflagwrongorder}]{\raisebox{-.6cm}{\includegraphics[scale=.769]{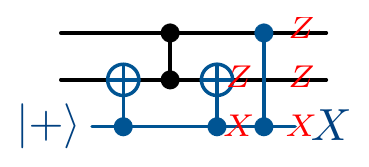}}} & 
\subfigure[\label{f:czflagswrongorder}]{\raisebox{-.6cm}{\includegraphics[scale=.769]{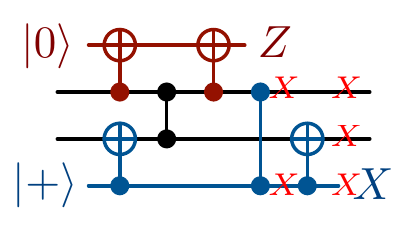}}}
\end{tabular}
\caption{Invalid $\CZ$ gadget constructions.  
} \label{f:czzflagwrongorderczflagswrongorder}
\end{figure}

The order of the gates matters.  Although the last two gates in \figref{f:czzflag} formally commute, switching them changes the faulty circuit so that an undetected $ZZ$ error can occur due to a single gate fault.  Similarly, in \figref{f:czflags} it is important that the gadget for catching $Z$ faults go inside that for $X$ faults.  With the other order, a single fault can lead to an undetected $XX$ error.  See \figref{f:czzflagwrongorderczflagswrongorder}.  

In practice, not every $\CZ$ gate need always be replaced with the full $\CZ$ gate gadget of \figref{f:czflags}.  Multiple gates can sometimes be combined under single flags.  We will see examples below.

\subsection{$\CCZ$ gadget}

The three-qubit gate $\CCZ = \identity - 2 \ketbra{111}{111}$ is denoted 
\begin{equation*}
\raisebox{-.4cm}{\includegraphics[scale=.769]{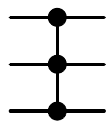}}
\end{equation*}
Again, it commutes with $Z$ errors.  It copies an $X$ error on the input into a $\CZ = \tfrac12 (II + IZ + ZI - ZZ)$ error on the output, i.e., into a linear combination of $II$, $IZ$, $ZI$ and $ZZ$ errors: 
\begin{equation*}
\raisebox{-.4cm}{\includegraphics[scale=.769]{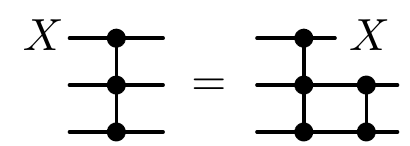}}
\end{equation*}

The following gadget, using four ancilla qubits, implements a $\CCZ$ on the black data qubits.  Furthermore, it satisfies that provided there is at most one failed gate and the measurement results are trivial, $0$ and $+$, then the error on the output is a linear combination of Paulis that could result from a one-qubit fault before or after a perfect $\CCZ$ gate, i.e., $III, ZII, XII, YII, XZI, YZI, XZZ, YZZ$ and qubit permutations thereof.  
\begin{equation} \label{e:cczgadget}
\raisebox{-1.4cm}{\includegraphics[scale=.769]{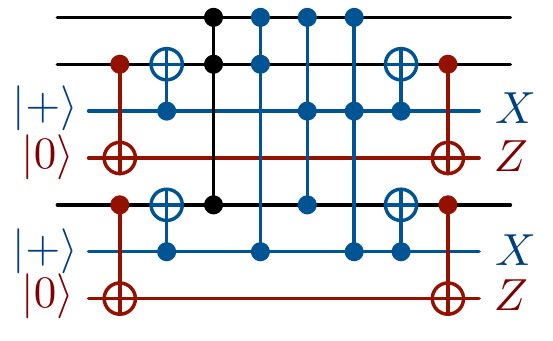}}
\end{equation}
That without faults the black and blue gates realize a $\CCZ$ is a special case of the following claim~\cite{YoderTakagiChuang16pieceableft}, with $r = 3$, $S^Z_1 = \{1\}$, $S^Z_2 = \{2, 3\}$, $S^Z_3 = \{5, 6\}$.  

\begin{claim} \label{t:logicalccz}
Consider an $n$-qubit CSS code with $k$ encoded qubits given by $\widebar X_j = X_{S^X_j}, \widebar Z_j = Z_{S^Z_j}$ for $S^X_j, S^Z_j \subseteq [n]$.  Let $U$ be the product of $\mathrm{C}^{(r-1)}Z = \identity - 2 \ketbra{1^r}{1^r}$ gates applied to every tuple of qubits in $S^Z_1 \times S^Z_2 \times \cdots \times S^Z_r$.  (If the $S^Z_j$ sets are not disjoint, then some of the applied gates will be $\mathrm{C}^{(s-1)}Z$ for some $s < r$.)  

Then $U$ is a valid logical operation.  It implements logical $\mathrm{C}^{(r-1)}Z$ on the first $r$ encoded qubits.  
\end{claim}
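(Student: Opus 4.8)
The plan is to exploit that $U$ is diagonal in the computational basis and simply to compute the phase it assigns to each basis vector. Writing a physical basis state as $\ket v$ with $v\in\{0,1\}^n$, each factor $\mathrm{C}^{(r-1)}Z$ acting on a tuple $(i_1,\dots,i_r)$ contributes the phase $(-1)^{v_{i_1}\cdots v_{i_r}}$, so $U\ket v=(-1)^{f(v)}\ket v$ with
\begin{equation*}
f(v)=\sum_{(i_1,\dots,i_r)\in S^Z_1\times\cdots\times S^Z_r} v_{i_1}v_{i_2}\cdots v_{i_r}.
\end{equation*}
Everything then reduces to understanding $f(v)\bmod 2$.

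First I would factor this phase. By the distributive law the sum over the Cartesian product is a product of sums, and since reduction mod $2$ is a ring homomorphism,
\begin{equation*}
f(v)\equiv\prod_{j=1}^{r}\Bigl(\bigoplus_{i\in S^Z_j} v_i\Bigr)\pmod 2 .
\end{equation*}
This holds whether or not the $S^Z_j$ are disjoint; overlaps merely produce tuples with repeated indices, and $v_i^2=v_i$ is precisely why such factors reduce to lower-order $\mathrm{C}^{(s-1)}Z$ gates, as the claim notes. Setting $s_j(v)=\bigoplus_{i\in S^Z_j} v_i\in\{0,1\}$, the net phase is the AND $s_1(v)\wedge\cdots\wedge s_r(v)$.

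Next I would connect $s_j(v)$ to the logical labels. By definition $\widebar Z_j=Z_{S^Z_j}$ acts on $\ket v$ as $(-1)^{s_j(v)}$, so $s_j$ is the eigenvalue of the $j$-th logical $Z$. The key step is that for a logical basis state $\lket c$ (with $c\in\{0,1\}^k$) the value $s_j(v)$ is constant, equal to $c_j$, across every $\ket v$ in the support of $\lket c$. Indeed, this support is a single coset of the $X$-type stabilizer group obtained from a representative $v_c$ by adding $X$-stabilizers: $s_j$ is invariant under adding any $X$-stabilizer, because $\widebar Z_j$ commutes with them, and on $v_c$ — the string flipped by the operators $\widebar X_k=X_{S^X_k}$ for those $k$ with $c_k=1$ — it evaluates to $\sum_k c_k\abs{S^Z_j\cap S^X_k}\equiv c_j\pmod 2$, using the symplectic relation $\abs{S^Z_j\cap S^X_k}\equiv\delta_{jk}$ that makes $\widebar X,\widebar Z$ valid logical operators. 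Hence $f(v)\equiv c_1c_2\cdots c_r\pmod 2$ uniformly on the support.

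It then follows that $U\lket c=(-1)^{c_1c_2\cdots c_r}\lket c$. Since the phase depends only on the logical label $c$ and not on the internal coset structure, $U$ preserves the code space, so it is a valid logical operation, and on the logical computational basis it acts exactly as $\mathrm{C}^{(r-1)}Z$ on the first $r$ encoded qubits. I expect the one genuinely delicate point to be establishing that $s_j(v)$ is constant across the support of $\lket c$: that is where the CSS structure and the hypothesis that $\widebar X_j,\widebar Z_j$ are honest logical operators (through their commutation relations) do the real work. Once that invariance is secured, the factorization collapses the whole computation to the elementary Boolean identity above.
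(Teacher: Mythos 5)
Your proof is correct and takes essentially the same route as the paper's: both expand each logical basis codeword over computational basis states (your coset description is exactly the paper's explicit superposition $\prod_j \widebar X_j^{x_j} \sum_{X_S} X_S \ket{0^n}$), use the commutation relations $\abs{S^X_i \cap S^Z_j} \equiv \delta_{ij} \pmod 2$ and $\abs{S \cap S^Z_j} \equiv 0 \pmod 2$ to show that each supporting basis state's parity on $S^Z_j$ equals the $j$-th logical bit, and conclude that $U$ applies the phase $(-1)^{c_1 \cdots c_r}$. The only presentational difference is that your explicit mod-2 factorization of the phase polynomial is left implicit in the paper, and the paper additionally sketches a separate induction-by-circuit-identity argument for the disjoint case, which your argument subsumes.
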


\begin{proof}
For the case that the sets $S^Z_j$ are disjoint, the claim is immediate in the case of singleton sets and follows in general by induction on the set size using the identity 
\begin{equation*} \label{e:cnotcommutecgate}
\raisebox{-.5cm}{\includegraphics[scale=.769]{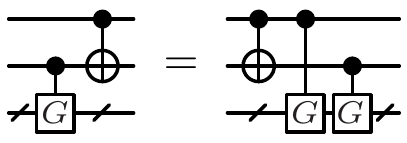}}
\end{equation*}
for any gate $G$, on one or more qubits, with $G^2 = \identity$.  (Figure~\ref{f:czzflag} provides one example.)  

More generally, the claim follows by writing out computational basis codewords as superpositions of computational basis states, and computing the effect of~$U$.  For $(x, y) \in \{0,1\}^r \times \{0,1\}^{k-r}$, the codeword $\ket{\widebar{x y}}$ is a uniform superposition of computational basis states: 
\begin{equation*}
\ket{\widebar{x y}} \propto \prod_{j=1}^r \widebar X_j^{x_j} \prod_{i=r+1}^k \widebar X_i^{y_{i-r}} \sum_{\text{stabilizers $X_S$}} X_S \ket{0^n}
\end{equation*}
The relationships $\widebar X_i \widebar Z_j = (-1)^{\delta_{ij}} \widebar Z_j \widebar X_i$ imply that $\abs{S^X_i \cap S^Z_j}$ is odd for $i = j$ and even otherwise; and for any stabilizer $X_S$, $[X_S, \widebar Z_j] = 0$ so $\abs{S \cap S^Z_j}$ is even.  

If $x \neq 1^r$, say $x_j = 0$, then for every term $\ket z$ in the above sum, $z$ has an even number of $1$s in $S^Z_j$, implying that $U \ket z = \ket z$.  Hence $U \ket{\widebar{x y}} = \ket{\widebar{x y}}$.  

If $x = 1^r$, then for any term $\ket z$ in $\ket{\widebar{x y}}$, $z$ has an odd number of $1$s in each $S^Z_j$ and therefore $U \ket z = - \ket z$.  Hence $U \ket{\widebar{x y}} = - \ket{\widebar{x y}}$.  
\end{proof}

\section{Fault-tolerant operations \mbox{for $\llbracket \lowercase{n},\lowercase{n}-2,2 \rrbracket$ codes}}

For even $n$, the $\llbracket n, n-2, 2 \rrbracket$ error-detecting code has stabilizers $X^{\otimes n}$ and $Z^{\otimes n}$, and logical operators $\widebar{X}_j = X_1 X_{j+1}$, $\widebar Z_j = Z_{j+1} Z_n$ for $j = 1, \ldots, n-2$.  This code, its symmetries, and methods of computing fault tolerantly on the encoded qubits were studied by Gottesman~\cite{Gottesman97}.  However, his techniques require at least $2 n$ extra qubits.  For example, to apply a CNOT gate between two logical qubits in the same code block, he teleports them each into separate code blocks, applies transversal CNOT gates between the blocks, and then teleports them back.  

We will give a fault-tolerant implementation of encoded CNOT and Hadamard gates on arbitrary logical qubits, using only two extra qubits.  Two-qubit fault-tolerant procedures for state preparation, error detection and projective measurement were given in~\cite{ChaoReichardt17errorcorrection}.  For $n \geq 6$ (so there are at least three encoded qubits), we will give a four-qubit fault-tolerant implementation of the encoded $\CCZ$ gate, thereby completing a universal gate set.

\subsection{Permutation symmetries and transversal operations}

Fault tolerance for a distance-two code means that any single fault within an operation should either have no effect or lead to a detectable error.  For example, of course the $4^{n-2}$ logical Pauli operators can all be applied fault tolerantly, since the operations do not couple any qubits.  

All qubit permutations preserve the two stabilizers and therefore preserve the code space.  They are also fault tolerant if implemented either by relabeling the qubits or by moving them past each other (and not by using two-qubit SWAP gates~\cite{Gottesman00local}).  For $i, j \in [n-2]$, $i \neq j$, the qubit swap $(i+1,j+1)$ swaps the logical qubits~$i$ and~$j$.  The qubit swap $(1, 2)$ implements logical CNOTs from qubits $2$ through $n - 2$ into~$1$, and the qubit swap $(2, n)$ implements logical CNOTs in the opposite direction: 
\begin{equation*}
\raisebox{-.8cm}{\includegraphics[scale=.769]{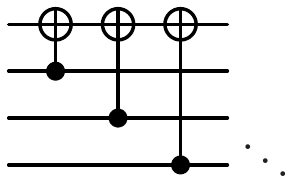}}
\raisebox{-.1cm}{\quad\text{and}\qquad}
\raisebox{-.8cm}{\includegraphics[scale=.769]{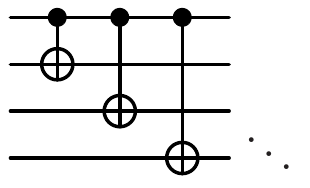}}
\end{equation*}

Transversal Hadamard, $H^{\otimes n}$, followed by the qubit swap $(1, n)$, implements logical $H^{\otimes (n-2)}$.  

The Clifford reflection $G = \tfrac{i}{\sqrt 2}(X + Y)$ conjugates $X \leftrightarrow Y$ and $Z \rightarrow -Z$.  Transversal $G$ is a valid logical operation (up to $Z_n$ to correct the $X^{\otimes n}$ syndrome if $n = 2 \mod 4$).  It implements logical $\CZ$ gates between all encoded qubits: 
\begin{equation*}
\raisebox{-.8cm}{\includegraphics[scale=.769]{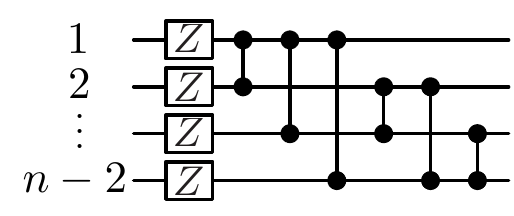}}
\end{equation*}

The operations given so far generate a group much smaller than the full $(n-2)$-qubit Clifford group.  The qubit permutations generate $n!$ different logical operations (except for $n = 4$, just $6$ operations).  With the transversal application of the six one-qubit Clifford gates, up to Paulis, this gives $6 (n!)$ different logical operations (or $36$ for $n = 4$).  \tabref{f:groupsizes} gives the sizes of various interesting subgroups of the Clifford group, for comparison.  

\begin{table}
\begin{center}
\begin{tabular}{c |@{\quad} c @{\quad} l @{\quad} l@{\quad} l}
\hline \hline
$k$ & $S_k$ & $\GL(k, 2)$ & $\langle \text{CNOT}, H \rangle$ & $\mathcal C_k / \mathcal P_k$ \\
\hline
1 & 1 & $\times 1$ & $\times 2$ & $\times 3$ \\
2 & 2 & $\times 3$ & $\times 12$ & $\times 10$ \\
3 & 6 & $\times 28$ & $\times 240$ & $\times 36$ \\
4 & 24 & $\times 840$ & $\times 17280$ & $\times 136$ \\
5 & 120 & $\times 83328$ & $\times 4700160$ & $\times 528$ \\
6 & 720 & $\times 27998208$ & $\times 4963368960$ & $\times 2080$ \\
7 & 5040 & $\times 3.2 \cdot 10^{10}$ & $\times 2.1 \cdot 10^{13}$ & $\times 8256$ \\ 
8 & 40320 & $\times 1.3 \cdot 10^{14}$ & $\times 3.4 \cdot 10^{17}$ & $\times 32896$ \\
\hline \hline
\end{tabular}
\end{center}
\caption{Sizes of the $k$-qubit Clifford group and subgroups.  There are $\abs{S_k} = k!$ permutations of $k$ qubits.  CNOT gates generate a group of size $\abs{\GL(k,2)} = \prod_{j=0}^{k-1}(2^k - 2^j)$, adding Hadamard gates generates a larger group, and finally the full Clifford group, up to the $\abs{\mathcal P_k} = 4^k$ Paulis, has size $2^{k^2} \prod_{j=1}^k (4^k - 1)$.  The sizes are given as multiples of the previous columns, e.g., $\abs{\mathcal C_2 / \mathcal P_2} = 2 \times 3 \times 12 \times 10 = 720$.  \comment{$\abs{\langle \text{CNOT}, H \rangle} = \abs{\mathcal C_k / \mathcal P_k} / (2^{k-1} (2^k + 1))$.}  
} \label{f:groupsizes}
\end{table}

We next give fault-tolerant implementations for a logical Hadamard gate on a single encoded qubit, and for a logical $\CZ$ gate between two encoded qubits.  These generate a large subgroup of the Clifford group, the $\langle \text{CNOT}, H \rangle$ column in \tabref{f:groupsizes}.

\subsection{$\CZ$ gate}

By \claimref{t:logicalccz}, a logical $\CZ_{1,2}$ gate can be implemented by $Z_n \, \CZ_{2,3} \, \CZ_{2,n} \, \CZ_{3,n}$: 
\begin{equation} \label{e:cz23ncz12}
\text{physical}
\raisebox{-.5cm}{\includegraphics[scale=.769]{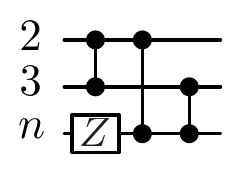}}
\;\, = \;\,
\text{logical}
\raisebox{-.3cm}{\includegraphics[scale=.769]{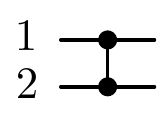}}
\end{equation}

However, this implementation is not fault tolerant.  Some failures are detectable; for example, if the $\CZ_{2,3}$ gate fails as $XI$, then the final, detectable error is $X_2 Z_n$.  Others are not, e.g., if the $\CZ_{2,3}$ gate fails as $XX$, then the final $X_2 X_3 = \widebar X_1 \widebar X_2$ error is undetectable.  

The bad faults that can cause undetectable logical errors are as follows: 
\begin{center}
\begin{tabular}{c c | c c | c c}
\hline \hline
\multicolumn{2}{c|}{$\CZ_{2,3}$} & \multicolumn{2}{c|}{$\CZ_{2,n}$} & \multicolumn{2}{c}{$\CZ_{3,n}$} \\
Fault & Error & Fault & Error & Fault & Error \\
\hline
$ZZ$ & $Z_2 Z_3$
& $ZZ$ & $Z_2 Z_n$
& $ZZ$ & $Z_3 Z_n$ \\
$XX$ & $X_2 X_3$
& $XY$ & $X_2 Z_3 Y_n$
& $XX$ & $X_3 X_n$ \\
$YY$ & $Y_2 Y_3$
& $YX$ & $Y_2 Z_3 X_n$
& $YY$ & $Y_3 Y_n$ \\
\hline \hline
\end{tabular}
\end{center}
In particular, all these bad faults are caught by the $\CZ$ gadget of \thmref{t:czflags}.  Therefore replacing each physical $\CZ$ gate in~\eqnref{e:cz23ncz12} with that gadget gives a fault-tolerant implementation of a logical $\CZ_{1,2}$ gate.  The circuit uses at most two ancilla qubits at a time.  

In fact, one can simplify the resulting circuit by using the same $\ket 0$ ancilla to catch $X$ faults on multiple $\CZ$ gates.  The following circuit is also fault tolerant: 
\begin{equation*}
\raisebox{-1.3cm}{\includegraphics[scale=.769]{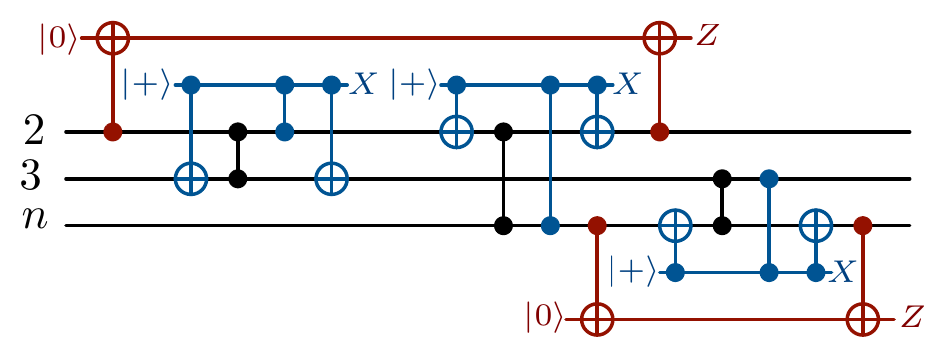}}
\end{equation*}

The gadgets to catch $Z$ faults can be merged, too.  The following circuit is fault tolerant, and still requires at most two ancilla qubits at a time: 
\begin{equation}
\raisebox{-1.3cm}{\includegraphics[scale=.769]{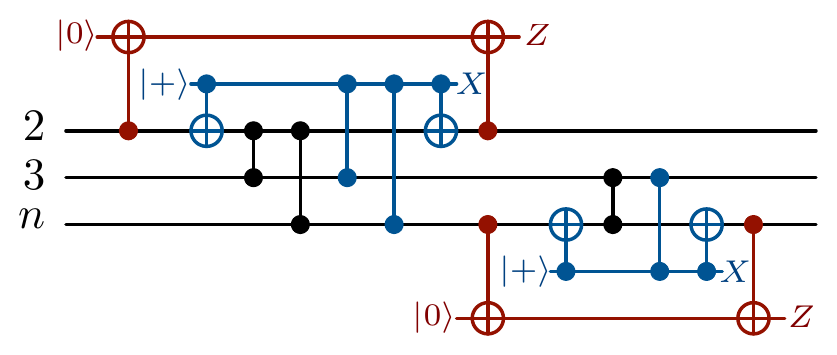}}
\end{equation}
Perhaps further simplifications are possible.

\vspace{-.1cm}
\subsection{Targeted Hadamard gate}

A single encoded Hadamard gate can also be implemented fault tolerantly with two extra qubits.  The black portion of the circuit below, with \raisebox{-.1cm}{\includegraphics[scale=.45]{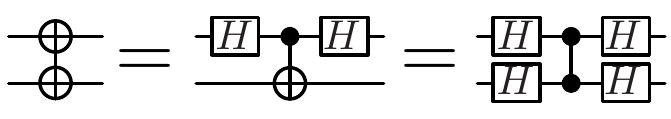}}$\;$, implements $\widebar H_1$.  The red and blue portions, analogous to Figs.~\ref{f:czxflag} and~\ref{f:czzflag}, respectively, catch problematic faults.  $X$ measurements should return~$+$ and $Z$ measurements~$0$.  
\begin{equation}
\raisebox{-1.3cm}{\includegraphics[scale=.769]{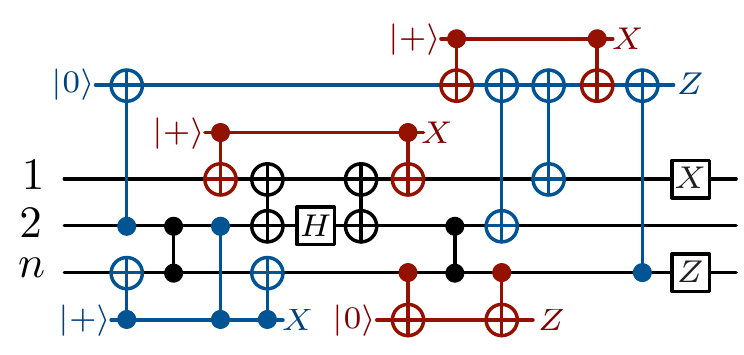}}
\end{equation}
The circuit's fault tolerance can be verified by enumerating all the ways in which single gates can fail.

\vspace{-.1cm}
\subsection{Four-ancilla $\CCZ$ gate}

For $n \geq 6$, a $\CCZ$ gate on encoded qubits $1, 2, 3$ can be implemented by round-robin $\CCZ$ gates on $\{2, n\} \times \{3, n\} \times \{4, n\}$, by \claimref{t:logicalccz}: 
\begin{equation} \label{e:errordetectingroundrobinccz234n}
\raisebox{-.8cm}{\includegraphics[scale=.769]{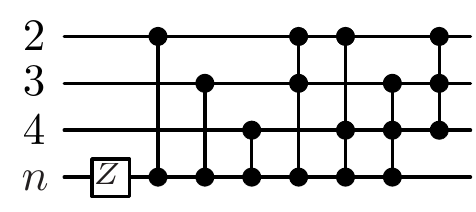}}
\end{equation}
This circuit uses one $Z$, three $\CZ$ and four $\CCZ$ gates.  To make it fault tolerant, use the gadget from \figref{f:czflags} for each $\CZ$ gate, and replace each $\CCZ$ with the gadget of Eq.~\eqnref{e:cczgadget}.  Overall, this requires four ancilla qubits.  

Single gate faults are either caught by the gadgets or lead to an error that could also arise from a one-qubit fault between the gates in~\eqnref{e:errordetectingroundrobinccz234n}.  A one-qubit $X$ or $Y$ fault will be detectable at the end because it is copied only to linear combinations of $Z$s---the $X$ component of the final error will still have weight one---and a one-qubit $Z$ fault will be detectable because it commutes through the $\CZ$ and $\CCZ$ gates.  Therefore the procedure is fault tolerant.

\vspace{0cm}
\section{Fault-tolerant operations for the $\llbracket 15,7,3 \rrbracket$ \mbox{Hamming code}}

The $\llbracket 15,7,3 \rrbracket$ Hamming code is a self-dual CSS, perfect distance-three code.  Packing seven logical qubits into $15$ physical qubits, it is considerably more efficient than more commonly used $\llbracket 7,1,3 \rrbracket$ and $\llbracket 9,1,3 \rrbracket$ CSS codes, although it tolerates less noise.  

We first give a presentation of the code and its symmetries following~\cite{Harrington11permutations}.  Then we give a two-ancilla-qubit method for fault tolerantly implementing the full Clifford group on the encoded qubits, and, to complete a universal gate set, a four-qubit fault-tolerant encoded $\CCZ$ gate.  

Two-qubit fault-tolerant procedures for state preparation and error correction were given in~\cite{ChaoReichardt17errorcorrection}.

\vspace{0cm}
\subsection{$\llbracket 15,7,3 \rrbracket$ Hamming code}

The $\llbracket 15,7,3 \rrbracket$ Hamming code has four $X$ and four $Z$ stabilizers each given by the following parity-checks: 
\begin{equation} \label{e:1573stabilizers}
\begin{tabular}{c c c c c c c c c c c c c c c}
0&0&0&0&0&0&0&1&1&1&1&1&1&1&1\\
0&0&0&1&1&1&1&0&0&0&0&1&1&1&1\\
0&1&1&0&0&1&1&0&0&1&1&0&0&1&1\\
1&0&1&0&1&0&1&0&1&0&1&0&1&0&1
\end{tabular}
\end{equation}
Index the qubits left to right from $1$ to $15$.  Observe that the columns are these numbers in binary.  

As in~\cite{Harrington11permutations}, we define logical operators based on the following seven weight-five strings: 
\begin{equation} \label{e:1573logicaloperators}
\begin{tabular}{c c c c c c c c c c c c c c c}
1&1&0&1&0&0&0&1&0&0&0&0&0&0&1\\
1&1&0&0&1&0&0&0&0&1&0&1&0&0&0\\
1&1&0&0&0&1&0&0&0&0&1&0&0&1&0\\
1&1&0&0&0&0&1&0&1&0&0&0&1&0&0\\
1&0&0&1&0&1&0&0&1&1&0&0&0&0&0\\
1&0&0&1&0&0&1&0&0&0&0&1&0&1&0\\
1&0&0&0&0&0&0&1&0&1&0&0&1&1&0
\end{tabular}
\end{equation}
From the first string, $\widebar X_1 = XXIXIIIXIIIIIIX$ and $\widebar Z_1 = ZZIZIIIZIIIIIIZ$.  The remaining strings specify the logical operators $\widebar X_2, \widebar Z_2$ through $\widebar X_7, \widebar Z_7$.

\subsection{Transversal operations}

Transversal operations are automatically fault tolerant.  

Transversal Pauli operators implement logical transversal Pauli operators.  Indeed, transversal $X$, i.e., $X^{\otimes 15}$, preserves the code and implements transversal logical $X$, i.e., $X^{\otimes 7}$, on the code space, and similarly for $Y$ and~$Z$.  

In fact, any one-qubit Clifford operator applied transversally preserves the code space and implements the same operator transversally on the encoded qubits.  For example, since the logical operators are each self-dual, applying the Hadamard gates $H^{\otimes 15}$ implements logical $H^{\otimes 7}$.  

Of course, since the code is CSS, transversal CNOT gates between two code blocks implements transversal logical CNOT gates on the code spaces.  Furthermore, \cite{PaetznickReichardt13universal} shows that on three code blocks transversal CCZ can be used to obtain a universal gate set.  Here, however, we will consider only single code blocks and Clifford operations.

\subsection{Permutation symmetries}

Permutations of the qubits are also fault tolerant, either by physically moving the qubits or by relabeling them.  

The code's permutation automorphism group has order $20,\!160$, and is isomorphic to $A_8$ and $\GL(4,2)$~\cite{Harrington11permutations, Grassl13automorphisms}.  It is generated by the following three $15$-qubit permutations: 
\begin{equation} \label{e:1573permutationautomorphisms}
\begin{split}
\sigma_1 &= (1,2,3)(4,14,10)(5,12,9)(6,13,11)(7,15,8) \\
\sigma_2 &= (1,10,5,2,12)(3,6,4,8,9)(7,14,13,11,15) \\
\sigma_3 &= (1,10,15,3,8,13)(4,6)(5,12,11)(7,14,9)
\end{split}\end{equation}
These permutations fix the code space, but act nontrivially within it.  The permutations $\sigma_1$ and $\sigma_2$ apply the respective permutations $(1,2,3)$ and $(3,4,5,6,7)$ to the seven logical qubits.  Together, these generate the alternating group $A_7$ of even permutations.  

The logical effect of $\sigma_3$ is not a permutation.  It is equivalent to the following circuit of $24$ CNOT gates, in which gates with the same control wire are condensed: 
\begin{equation}
\raisebox{-1.2cm}{\includegraphics[scale=.577]{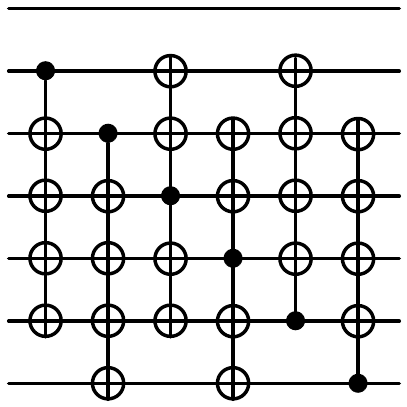}}
\end{equation}
Thus the first logical qubit is fixed, while for $j \in \{2, \ldots, 7\}$ and $P \in \{X, Y, Z\}$, $P_j$ is mapped to $(\prod_{j=2}^7 P_j) P_{j+1}$, wrapping the indices cyclically.  This is a six-qubit generalization of a four-qubit operator studied in~\cite[Sec.~6]{Gottesman97}.  (Like permutations, this operation has the property of being a valid transversal operation on any stabilizer code.)

\subsection{$\CZ$ circuits based on permutation symmetries} \label{s:cz89to1011and1213to1415}

Any permutation symmetry of the code can be turned into a $\CZ$ automorphism (\figref{f:permutationcz}): 

\begin{figure}
\centering
\begin{tabular}{c@{$\qquad$}c}
\subfigure[\label{f:permutationcz}]{\raisebox{-1.5cm}{\includegraphics[scale=.769]{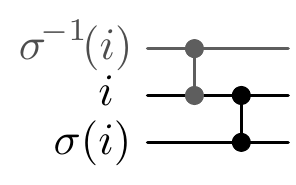}}} & 
\subfigure[\label{f:czsigma3logical}]{\raisebox{-1.5cm}{\includegraphics[scale=.769]{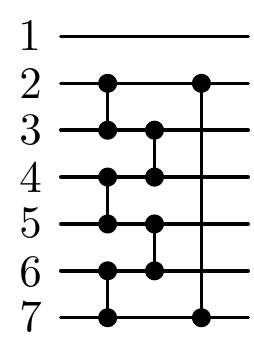}}}
\end{tabular}
\caption{(a) If $\sigma$ is a permutation automorphism for a self-dual CSS code, then applying $\CZ$ gates along its cycles fixes the code space (\claimref{t:permutationcz}).  (b) The logical effect for $\CZ$ gates following the permutation $\sigma_3$.}
\end{figure}

\begin{claim} \label{t:permutationcz}
For a self-dual CSS code, if $\sigma$ is a qubit permutation that fixes the code space, then the circuit with a $\CZ$ gate from $i$ to $\sigma(i)$, for all $i \neq \sigma(i)$, fixes the code space up to Pauli $Z$ corrections.  
\end{claim}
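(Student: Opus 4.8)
The plan is to show that conjugating the code's stabilizer group by the circuit returns the group to itself \emph{up to signs}, and that those signs are removable by a single $Z$-type Pauli. Write $U$ for the circuit, a product of mutually commuting $\CZ$ gates, one joining $i$ to $\sigma(i)$ for each $i \neq \sigma(i)$. Recall that for a self-dual CSS code the $X$- and $Z$-stabilizers share their supports: they are $X_S$ and $Z_S$ as $S$ ranges over a single self-orthogonal binary code $C$ (so $\abs{S \cap S'}$ is even for all $S, S' \in C$), and $\sigma$ fixing the code space means precisely that $\sigma(C) = C$.

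First I would record the conjugation rules: $U$ fixes every $Z$ operator, and $U X_i U^\dagger = X_i Z_{\sigma(i)} Z_{\sigma^{-1}(i)}$ (qubit $i$ meets only the gate into $\sigma(i)$ and the gate out of $\sigma^{-1}(i)$). Hence $U$ fixes every $Z$-stabilizer $Z_S$ outright, and for an $X$-stabilizer, collecting the factors into canonical form gives
\[
 U X_S U^\dagger = \epsilon_S\, X_S\, Z_{\sigma(S)} Z_{\sigma^{-1}(S)}, \qquad \epsilon_S \in \{\pm 1\}.
\]
Because $\sigma(C) = C$, both $\sigma(S)$ and $\sigma^{-1}(S)$ lie in $C$, so by self-duality $Z_{\sigma(S)}$ and $Z_{\sigma^{-1}(S)}$ are genuine $Z$-stabilizers; thus $U X_S U^\dagger$ is a true stabilizer up to the sign $\epsilon_S$. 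The sign is genuinely present: already for a single $3$-cycle, $U$ sends the weight-three $X$ operator to its negative, so $U$ does not fix the code space on the nose.

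The crux is to prove that $S \mapsto \epsilon_S$ is \emph{linear} on $C$. For this I would compute $U(X_S X_{S'}) U^\dagger$ two ways. On one hand $X_S X_{S'} = X_{S \triangle S'}$, giving $\epsilon_{S \triangle S'}$ times a stabilizer. On the other hand, multiplying the two images incurs a recombination sign $(-1)^{\abs{(\sigma(S) \triangle \sigma^{-1}(S)) \cap S'}}$ from pushing the $Z$-part of the first past $X_{S'}$; that exponent equals $\abs{\sigma(S) \cap S'} + \abs{\sigma^{-1}(S) \cap S'}$, which vanishes because $\sigma(S), \sigma^{-1}(S), S' \in C$ and $C$ is self-orthogonal. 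Comparing the two expressions yields $\epsilon_{S \triangle S'} = \epsilon_S \epsilon_{S'}$, so $\epsilon$ is a character of $C$.

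Finally, a character of $C$ is represented by some vector $R$, i.e. $\epsilon_S = (-1)^{\abs{R \cap S}}$ for all $S \in C$ (the standard form on $\{0,1\}^n$ restricts onto the dual of $C$). Then $Z_R U$ fixes every $X_S$, since the correction sign $(-1)^{\abs{R \cap S}}$ cancels $\epsilon_S$, and it still fixes every $Z_S$; hence $Z_R U$ preserves the stabilizer group and therefore the code space, which is exactly the assertion that $U$ fixes the code space up to the Pauli $Z$ correction $Z_R$. The only real obstacle is the linearity step: self-orthogonality of $C$ is precisely what collapses the a priori quadratic sign $\epsilon_S$ into a linear one, and so is exactly where the self-dual hypothesis is used.
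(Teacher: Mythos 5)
Your proof is correct, and it shares its opening move with the paper's: conjugate the stabilizer generators through the circuit, observe that $Z$-stabilizers are untouched, that $U X_i U^\dagger = X_i Z_{\sigma(i)} Z_{\sigma^{-1}(i)}$, and hence that $U X_S U^\dagger = \pm X_S Z_{\sigma(S)} Z_{\sigma^{-1}(S)}$, which is a stabilizer up to sign because $\sigma(C) = C$ and the code is self-dual. The paper's proof stops exactly there, leaving implicit why the assorted signs $\epsilon_S$ on the various $X$-stabilizers can all be cancelled by one and the same Pauli $Z$ operator---which is what ``fixes the code space up to Pauli $Z$ corrections'' actually requires. Your second half (linearity of $S \mapsto \epsilon_S$ via self-orthogonality, realization of the character as $(-1)^{\abs{R \cap S}}$, and the check that $Z_R U$ then preserves the stabilizer group exactly) supplies precisely this missing consistency step, so your write-up is strictly more complete than the paper's three-line argument. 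One simplification worth knowing: the linearity you establish by hand also comes for free from group theory. Conjugation by $U$ is an automorphism of the Pauli group; the generator computation shows it maps $\mathcal{S}$ into $\langle -\identity \rangle \cdot \mathcal{S}$, and since $-\identity \notin \mathcal{S}$ this group splits as $\{\pm 1\} \times \mathcal{S}$, so the sign $\epsilon$ is a projection homomorphism, hence automatically a character; composing with the sign-free homomorphism $S \mapsto X_S$ makes $S \mapsto \epsilon_S$ linear with no computation. Your recombination-sign calculation is the concrete shadow of this fact (and conversely shows that the recombination sign must vanish). Either way, self-duality is genuinely indispensable where you first invoke it, namely to place $Z_{\sigma(S)}$ and $Z_{\sigma^{-1}(S)}$ in the stabilizer group at all.
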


\begin{proof}
$Z$ stabilizers commute with the $\CZ$ gates, so are preserved.  An $X$ stabilizer $X_S = \prod_{i \in S} X_i$ is mapped to $\prod_{i \in S} (X_i Z_{\sigma(i)} Z_{\sigma^{-1}(i)}) = \pm X_S Z_{\sigma(S) \cup \sigma^{-1}(S)}$.  Up to sign, this is a stabilizer, since $Z_{\sigma(S)} Z_{\sigma^{-1}(S)}$ is a stabilizer.  
\end{proof}

For example, the physical circuit in Eq.~\eqnref{e:cz23ncz12} comes from the cyclic permutation $(2,3,n)$ of the $\llbracket n, n-2, 2 \rrbracket$ code.  

Applying \claimref{t:permutationcz} to $\sigma_3$ of Eq.~\eqnref{e:1573permutationautomorphisms}, the two $\CZ$ gates for the cycle $(4, 6)$ cancel out, leaving the gates $(\CZ_{1,10} \CZ_{10,15} \cdots \CZ_{13,1}) (\CZ_{5,12} \CZ_{12,11} \CZ_{11,5}) \ldots$.  As shown in \figref{f:czsigma3logical}, 
the effect is that of logical $\CZ$ gates following the cycle $(2,3,4,5,6,7)$.  

Notice that the logical effect necessarily consists of encoded $\CZ$ gates, because logical $Z$ operators are unchanged and logical $X$ operators pick up $Z$ components.  Also, the map from \claimref{t:permutationcz} is not a homomorphism from permutations into unitary circuits.

\subsubsection*{$\CZ$ gates $\{8,9\}$ to $\{10,11\}$ and $\{12,13\}$ to $\{14,15\}$}

The permutation $(6,7)(8,10,9,11)(12,14,13,15)$ fixes the code, and under \claimref{t:permutationcz} corresponds to the eight $\CZ$ gates of \figref{f:cz89to1011and1213to1415}.  Figure~\ref{f:cz89to1011and1213to1415logical} gives their logical effect.  

\begin{figure}
\centering
\begin{tabular}{c@{$\quad$}c}
\subfigure[\label{f:cz89to1011and1213to1415}]{\raisebox{-1.5cm}{\includegraphics[scale=.769]{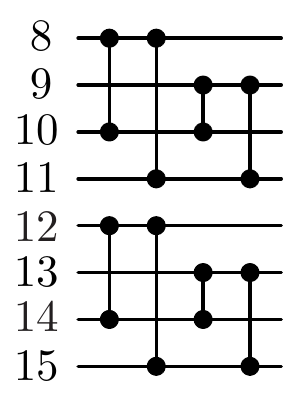}}} & 
\subfigure[\label{f:cz89to1011and1213to1415logical}]{\raisebox{-1.5cm}{\includegraphics[scale=.769]{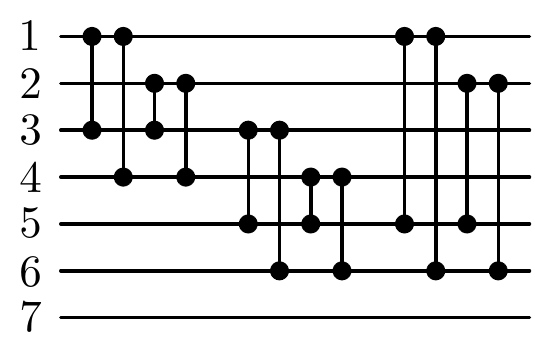}}} \\
\multicolumn{2}{c}{\subfigure[\label{f:cz89to1011and1213to1415errorcorrection}]{
\raisebox{-3.5cm}{\hspace{-.5cm}\includegraphics[scale=.62]{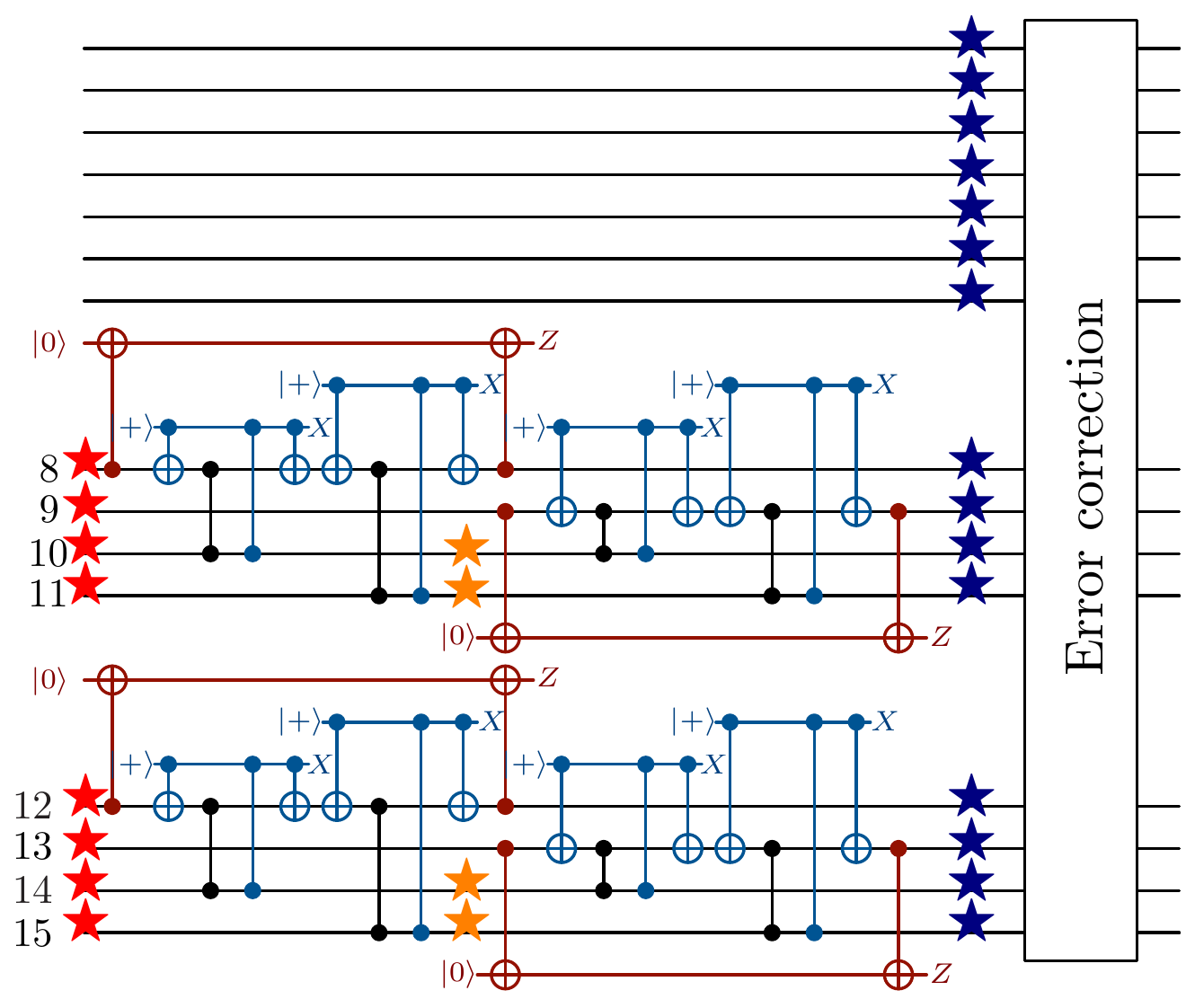}}
}}
\end{tabular}
\caption{The physical $\CZ$ gates in (a) have the logical effect of (b).  (c) The circuit can be made fault tolerant by replacing the $\CZ$ gates with the gadgets from \secref{s:czflags}, and simplifying.} \label{f:cz89to1011and1213to1415physicallogical}
\end{figure}

Using Magma~\cite{Magma97}, we compute that the group generated by this operation, the permutations $\sigma_1, \sigma_2, \sigma_3$, and transversal $H$ has the same size as the $\langle \text{CNOT}, H \rangle$ group on \emph{six} qubits (about $1.001 \cdot 10^{20}$).  (Adding transversal $G$ only triples the group size.)  This hints that by working in a logical basis in which one qubit has $\widebar X = X^{\otimes 15}$ and $\widebar Z = Z^{\otimes 15}$ (operators fixed by the permutations and by \figref{f:cz89to1011and1213to1415}), perhaps arbitrary combinations of CNOT and $H$ can be applied to the other six qubits.  But no, only half the $\langle \text{CNOT}, H \rangle$ group can be reached.  

The circuit of \figref{f:cz89to1011and1213to1415} is clearly not fault tolerant.  (For example, an $XX$ fault after the $\CZ_{9,11}$ gate gives the error $X_9 X_{11}$, which is indistinguishable from $X_2$.)  We can use the gadgets from \secref{s:czflags} for each $\CZ$ gate to obtain the circuit of \figref{f:cz89to1011and1213to1415errorcorrection}, shown with the trailing error correction.  Two ancilla qubits are needed.  

We claim that this compiled circuit is fault tolerant.  This means that if the input lies in the code space, the compiled circuit has at most one fault (a two-qubit Pauli fault after a gate, or a one-qubit fault on a resting qubit), and the subsequent error correction is perfect; then the final outputs lie in the code space with no logical errors.  To verify fault tolerance, there are two cases to check.  

\medskip

First, consider the case that, with at most one fault, all the gadget measurements give the trivial output ($0$ for a $Z$ measurement, $+$ for $X$).  Since the gadgets catch two-qubit gate faults, we need only check possible one-qubit faults between gates.  Inequivalent fault locations are marked with stars in \figref{f:cz89to1011and1213to1415errorcorrection}.  (Faults at other locations either cause the same errors, or will be caught.)  In particular, entering error correction the possible error can be $\identity$, $X_1, Z_1, Y_1, \ldots, X_{15}, Z_{15}, Y_{15}$---from the ${\color{Midnight} \bigstar}$ locations.  Or, from ${\color{red} \bigstar}$ locations, it can be $XIZZ, YIZZ, IXZZ, IYZZ, ZZXI, ZZYI, ZZIX$, $ZZIY$, and $IZXI, IZYI, IZIX, IZIY$ from ${\color{Tangerine} \bigstar}$ locations, on qubits $8, 9, 10, 11$---and similarly for qubits $12$ to $15$.  This give $70$ different errors total.  All $70$ have distinct syndromes, and therefore can be corrected.  (This fact can be verified either by computing all the syndromes, or by observing from Eq.~\eqnref{e:1573stabilizers} that $Z_8 Z_9 Z_{10} Z_{11}$, $Z_{12} Z_{13} Z_{14} Z_{15}$, $Z_1 Z_6 Z_7$, $Z_1 Z_8 Z_9$, $Z_1 Z_{12} Z_{13}$ and $Z_1 Z_{14} Z_{15}$ are logical operators.  Thus, for example, if you observe the $Z$ syndrome $1000$, for error $X_8$, and the $X$ syndrome $0001$, for error~$Z_1$, you can safely correct $X_8 Z_{10} Z_{11}$.  $Z_1 X_8$ cannot occur.)  

Note that the error-correction procedure needs to take into account the $X$ and $Z$ stabilizer syndromes together to decide what correction to apply.  This can work because the $\llbracket 15,7,3 \rrbracket$ code is not a perfect stabilizer code: there are $2^8$ possible syndromes but only $1 + 15 \cdot 3$ possible trivial or weight-one errors.  (It is only perfect as a CSS code, i.e., the $2^4$ $Z$ stabilizer syndromes are exactly enough to correct the $1 + 15$ possible trivial or weight-one $X$ errors, and similarly for $Z$ errors.)  This leaves room to correct some errors of weight more than one.  

\medskip

Next, consider the case that, with at most one fault in \figref{f:cz89to1011and1213to1415errorcorrection}, one or more of the gadget measurements gives a nontrivial output.  This case is much simpler, because the measurement results localize the fault, leaving only a few possibilities for the error entering error correction.  One must verify that in all cases, these possibilities are distinguished by their syndromes.  

For example, if the first $Z$ measurement returns $1$ and all other measurements are trivial, the errors from single faults that can occur are, on qubits $8, 9, 10, 11$: 
\begin{equation*}
\begin{array}{r@{,\;} r@{,\;} r@{,\;} r}
IIII & ZIII & XIII & YIII, \\
XIIZ &YIIZ & XIZZ & YIZZ, \\
XZIX & XZIY & XZXZ & XZYZ, \\
YZIX & YZIY & YZXZ & YZYZ
\end{array}
\end{equation*}
These $16$ possible errors all have distinct syndromes, so are correctable.  

As another example, if the last two measurements, of qubits coupled to qubit~$13$, are nontrivial, then the possible errors from single faults are, on qubits $12, 13, 14, 15$: 
\begin{equation*}
IXII, IYII, IXIZ, IYIZ, IXIX, IXIY
\end{equation*}
Again, these have distinct syndromes.  

Other possible measurement outcomes are similar.  We have used a computer to check them all.

\subsection{Round-robin $\CZ$ circuits to complete the Clifford group}

The above operations do not generate the full seven-qubit logical Clifford group, and we have not been able to find a permutation for which applying \claimref{t:permutationcz} enlarges any further the generated logical group.  Instead, we turn to the round-robin construction of \claimref{t:logicalccz}.

\subsubsection*{$\CZ$ gates $4$ to $\{5,6,7\}$, $8$ to $\{9,10,11\}$, $12$ to $\{13,14,15\}$}

Observe that $Z_{\{4,8,12\}}$ and $Z_{\{4,5,6,7\}} \sim Z_{\{8,9,10,11\}} \sim Z_{\{12,13,14,15\}}$ are logical operators, implementing respectively $\widebar Z_{\{2,5,7\}}$ and $\widebar Z_{\{1,2,3,4\}}$.  By a minor extension of \claimref{t:logicalccz}, applying $Z_{\{4,8,12\}}$ and nine $\CZ$ gates from $4$ to each of qubits $\{5,6,7\}$, $8$ to $\{9,10,11\}$, and $12$ to $\{13,14,15\}$ preserves the code space.  The logical effect is 
\begin{equation}
\raisebox{-1.3cm}{\includegraphics[scale=.769]{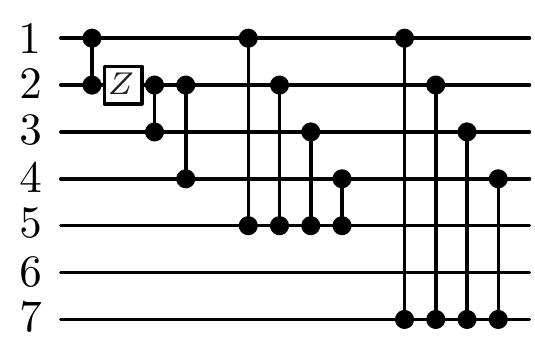}}
\end{equation}

Together with permutations and transversal operations, this circuit completes the seven-qubit Clifford group, without needing the operation from \secref{s:cz89to1011and1213to1415}.  To make the operation fault tolerant, we will transform it in three~steps.  

First, consider the circuit below, in which we have wrapped the $\CZ$ gates leaving qubits $4$, $8$ and $12$ with overlapping gadgets to catch $X$ faults.  If at most one fault occurs and one or more of the $Z$ measurements gives~$1$, then the errors that can occur are distinguished by their syndromes.  
\begin{equation} \label{e:cz4to567and8to91011and12to131415flags}
\raisebox{-3.7cm}{\includegraphics[scale=.769]{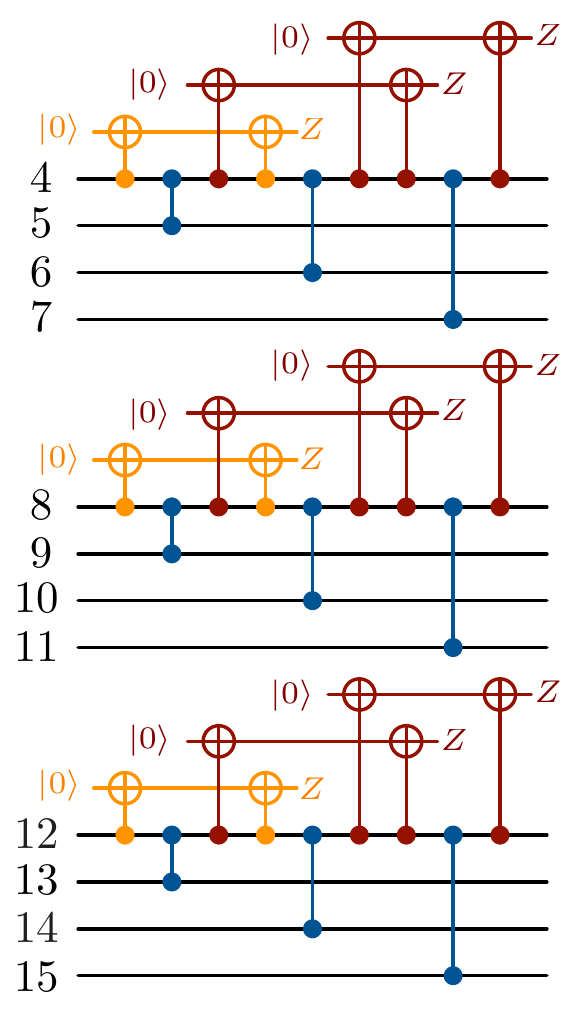}}
\end{equation}
For example, if the orange, first $Z$ measurement off qubit~$4$ gives~$1$ and all others~$0$, then the possible errors entering error correction are, on qubits $4, 5, 6, 7$: 
\begin{equation*}
\begin{array}{r@{,\;} r@{,\;} r@{,\;} r}
IIII & ZIII \\
XIZZ & XXZZ & XYZZ & XZZZ \\
YIZZ & YXZZ & YYZZ &YZZZ 
\end{array}
\end{equation*}
These errors all have distinct syndromes.  

The circuit in Eq.~\eqnref{e:cz4to567and8to91011and12to131415flags} is not fault tolerant, however, because with at most one fault if all the $Z$ measurements give~$0$, some inequivalent errors will have the same syndrome.  We can list the problematic errors.  For each of the following sets, the errors within the set are all possible, but have the same syndrome: 
\begin{equation}\begin{gathered} \label{e:baderrors}
\{ Z_1, Z_4 Z_5, Z_8 Z_9, Z_{12} Z_{13} \} \\
\{ Z_2, Z_4 Z_6, Z_8 Z_{10}, Z_{13} Z_{14} \} \\
\{ Z_3, Z_4 Z_7, Z_8 Z_{11}, Z_{12} Z_{15} \} \\
\{ X_4, Y_4 Z_5 Z_6 Z_7 \}, \{ Y_4, X_4 Z_5 Z_6 Z_7 \} \\
\{ X_8, Y_8 Z_9 Z_{10} Z_{11} \}, \{Y_8, X_8 Z_9 Z_{10} Z_{11} \} \\
\{ X_{12}, Y_{12} Z_{13} Z_{14} Z_{15} \}, \{ Y_{12}, X_{12} Z_{13} Z_{14} Z_{15} \}
\end{gathered}\end{equation}

Next replace each of the blue $\CZ$ gates in Eq.~\eqnref{e:cz4to567and8to91011and12to131415flags} with the $ZZ$ fault gadget from \figref{f:czzflag}.  This gadget has the property that, with at most one failure, a $ZZ$ fault can only occur if the $X$ measurement returns~$-$.  These measurements thus distinguish the errors in the first three sets above.  

Yet the new circuit is still not fault tolerant.  The gadget measurements cannot distinguish the errors in each of the last six sets in~\eqnref{e:baderrors}.  For example, consider an $X_4$ error before the circuit.  It propagates to $X_4 Z_5 Z_6 Z_7$.  Since $Z_4 Z_5 Z_6 Z_7$ is a logical error, $X_4 Z_5 Z_6 Z_7$ is indistinguishable from a $Y_4$ error after the circuit, and no error-correction rules can correct for the possible logical error.  

Gadgets cannot protect against single-qubit faults that occur just before or after the circuit.  This circuit is qualitatively different from the one we considered in \secref{s:cz89to1011and1213to1415}, and a new trick is needed to make it fault tolerant.  

\medskip

Consider the following circuit from~\cite{ChaoReichardt17errorcorrection}, ignoring for now the orange portion at top right.  
\begin{equation} \label{e:456712131415ftsyndrome}
\raisebox{-2.4cm}{\includegraphics[scale=.769]{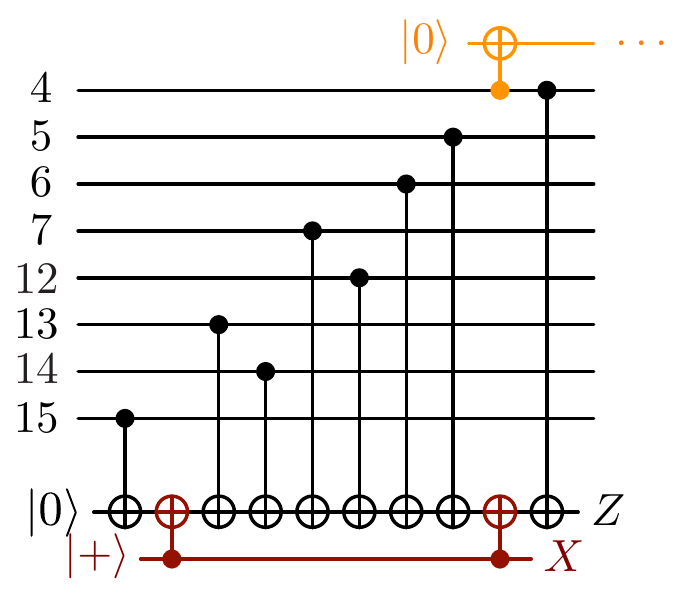}}
\end{equation}
This circuit fault-tolerantly extracts the syndrome of the $Z_{\{4, 5, 6, 7, 12, 13, 14, 15\}}$ stabilizer, in the sense that: 
\begin{itemize}
\item With no gate faults, the $Z$ measurement gives the syndrome, and the $X$ measurement gives~$+$.  
\item With at most one fault, if the $X$ measurement gives~$+$, then the data error has weight $\leq 1$.  
\item With at most one fault, if the $X$ measurement gives~$-$, then the $X$ component of the data error has weight $\leq 1$.  The $Z$ component can be any of 
\begin{equation*}\begin{gathered}
Z_4, Z_{\{4, 5\}}, Z_{\{4, 5, 6\}}, Z_{\{4, 5, 6, 12\}}, Z_{\{4, 5, 6, 7, 12\}} \sim Z_{\{13, 14, 15\}}, \\
Z_{\{4, 5, 6, 7, 12, 14\}} \sim Z_{\{13, 15\}}, Z_{\{4, 5, 6, 7, 12, 13, 14\}} \sim Z_{15}, \identity
\end{gathered}\end{equation*}
and these errors all have distinct syndromes.  (The order of the CNOT gates ensures this property.)  
\end{itemize}
In~\cite{ChaoReichardt17errorcorrection}, this circuit was used in a two-ancilla-qubit fault-tolerant error-correction procedure.  

The circuit in~\eqnref{e:456712131415ftsyndrome} is useful for us now to detect an $X_4$ or $Y_4$ error on the input to~\eqnref{e:cz4to567and8to91011and12to131415flags}.  However, it is not enough to measure the $Z$ syndrome, or even to run full error correction, before applying the circuit~\eqnref{e:cz4to567and8to91011and12to131415flags}, because an $X_4$ or $Y_4$ fault could happen after the syndrome measurement completes and before~\eqnref{e:cz4to567and8to91011and12to131415flags}.  This problem is solved by the orange portion of~\eqnref{e:456712131415ftsyndrome}, which is meant to continue into~\eqnref{e:cz4to567and8to91011and12to131415flags}, replacing the first $\ket 0$ preparation and CNOT.  It gives qubit~$4$ temporary protection, so that an $X_4$ or $Y_4$ fault is caught by either the syndrome measurement or the orange $Z$ measurement, or both.  

While the above arguments give intuition for the construction, they leave out the details.  Let us now present the full fault-tolerant construction.  

1. Start by applying~\eqnref{e:456712131415ftsyndrome} to extract the syndrome for $Z_{\{4,5,6,7,12,13,14,15\}}$.  If the $Z$ or $X$ measurement is nontrivial, then decouple the orange qubit with another CNOT, apply error correction, and finish by applying unprotected $\CZ$ gates $4$ to $\{5,6,7\}$, $8$ to $\{9,10,11\}$ and $12$ to $\{13,14,15\}$.  (This is safe because one fault has already been detected.)  

2. Next, if the $Z$ and $X$ measurements were trivial, apply the top third of circuit~\eqnref{e:cz4to567and8to91011and12to131415flags}, where the orange qubit wire continues from~\eqnref{e:456712131415ftsyndrome}, to implement protected $\CZ$ gates $4$ to $\{5, 6, 7\}$.  If any measurements are nontrivial, then finish by applying unprotected $\CZ$ gates $8$ to $\{9,10,11\}$ and $12$ to $\{13,14,15\}$, then error correction.  We have argued already that this is fault tolerant; the extended orange ``flag" is enough to catch $X_4$ or $Y_4$ faults between~\eqnref{e:456712131415ftsyndrome} and~\eqnref{e:cz4to567and8to91011and12to131415flags}.  

3. If the measurements so far were trivial, then apply a circuit analogous to~\eqnref{e:456712131415ftsyndrome} to extract the $Z_{\{8,9,10,11,12,13,14,15\}}$ syndrome.  (Note that this is still a stabilizer, even though the $\CZ$ gates $4$ to $\{5, 6, 7\}$ have changed the code.)  If the $Z$ syndrome or $X$ measurement is nontrivial, then apply error correction---a simple error-correction procedure is to apply $\CZ$ gates $4$ to $\{5, 6, 7\}$ to move back to the $\llbracket 15, 7, 3 \rrbracket$ code and correct there---before finishing with $\CZ$ gates $8$ to $\{9, 10, 11\}$ and $12$ to $\{13, 14, 15\}$.  If the $Z$ and $X$ measurements were trivial, then apply the middle portion of~\eqnref{e:cz4to567and8to91011and12to131415flags}, where the orange qubit wire extends from qubit~$8$, to implement protected $\CZ$ gates $8$ to $\{9, 10, 11\}$.  If any measurements are nontrivial, then finish by applying unprotected $\CZ$ gates from $12$ to $\{13,14,15\}$, then error correction.  

4. If the measurements so far were trivial, then extract the $Z_{\{8,9,10,11,12,13,14,15\}}$ syndrome using~\eqnref{e:456712131415ftsyndrome} except with the data qubits in order $12, 13, 14, 15, 8, 9, 10, 11$ top to bottom (so that the orange flag attaches to qubit~$12$).  
If the $Z$ or $X$ measurement is nontrivial, then with at most one fault whatever error there is on the data can be corrected.  (The easiest way is to move forward to the $\llbracket 15, 7, 3 \rrbracket$ code using $\CZ$ gates $12$ to $\{13, 14, 15\}$ and correct there.  Note that these $\CZ$ gates turn the weight-one errors $X_{12}, X_{13}, X_{14}, X_{15}$ into $X_{12} Z_{\{13,14,15\}}, Z_{12} X_{13}, Z_{12} X_{14}, Z_{12} X_{15}$, respectively, but these can still be corrected; e.g., if in $X$ error correction you detect $X_{12}$, apply the correction $X_{12} Z_{\{13, 14, 15\}}$.)  
If the $Z$ and $X$ measurements were trivial, then apply the bottom portion of~\eqnref{e:cz4to567and8to91011and12to131415flags} to implement protected $\CZ$ gates from $12$ to $\{13, 14, 15\}$, and correct errors based on the measurement results.  

Observe that this procedure requires two ancilla qubits.

\subsection{Four-qubit fault-tolerant $\CCZ$ for universality} \label{s:1573ccz}

In order to realize a universal set of operations on the seven encoded qubits, we give a four-ancilla-qubit fault-tolerant implementation for an encoded $\CCZ$ gate.  The idea is to start with a circuit of round-robin $\CCZ$ gates to implement the encoded $\CCZ$ non-fault-tolerantly (\claimref{t:logicalccz}), then replace each $\CCZ$ with the gadget of Eq.~\eqnref{e:cczgadget} to catch correlated errors.  Finally, we intersperse $X$ error correction procedures to catch $X$ faults before they can spread, much like the pieceable fault-tolerance constructions of~\cite{YoderTakagiChuang16pieceableft} except on a single code block.  
(A similar approach can also be used to implement encoded $\CZ$ gates.)  

\smallskip

By \claimref{t:logicalccz}, an encoded $\CCZ$ gate can be implemented by round-robin $\CCZ$ gates on qubits $\{1, 4, 5\} \times \{1, 6, 7\} \times \{1, 8, 9\}$ (one $Z$, six $\CZ$ and $21$ $\CCZ$ gates): 
\begin{equation} \label{e:1573roundrobinccz1456789}
\raisebox{-1.3cm}{\includegraphics[scale=.769]{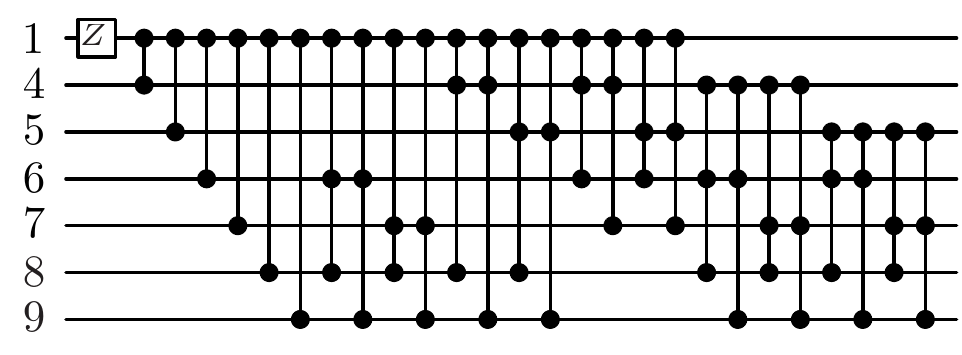}}
\end{equation}

To make this circuit fault tolerant, first replace each $\CZ$ gate with the gadget from \figref{f:czflags}, and replace each $\CCZ$ gate with the gadget from Eq.~\eqnref{e:cczgadget}.  After each gadget, apply $X$ error correction, and at the end apply both $X$ and $Z$ error correction.  (As in~\cite{YoderTakagiChuang16pieceableft}, it might be possible to put multiple $\CCZ$ gadgets before each $X$ error correction, but we have not tried to optimize this.)  Observe that $X$ error correction can be implemented even partially through the round-robin circuit because the code's $Z$ stabilizers are preserved by $\CCZ$ gates.  

There are two cases to consider to demonstrate fault tolerance: either a gadget is ``triggered" with a nontrivial, $1$ or $-$, measurement outcome, or no gadgets are triggered.  

\renewcommand{\theparagraph}{\arabic{paragraph}}

\paragraph{A gadget is triggered.}

If a gadget is triggered, then any Pauli errors can be present on its output data qubits.  It is straightforward to check mechanically that for each $\CZ$ gate in~\eqnref{e:1573roundrobinccz1456789}, all four possible $X$ errors, $II$, $IX$, $XI$ and $XX$, have distinct $Z$ syndromes, and so can be corrected immediately in the subsequent $X$ error correction, before the errors can spread.  By symmetry, the four possible $Z$ errors have distinct syndromes.  These errors commute through~\eqnref{e:1573roundrobinccz1456789} and are fixed by the final $Z$ error correction.  

Similar considerations hold for each $\CCZ$ gate: the possible $X$ and $Z$ error components have distinct syndromes, so an error's $X$ component can be corrected immediately and the $Z$ component corrected at the end.

\paragraph{No gadgets are triggered.}

If there is a single failure in a $\CZ$ or $\CCZ$ gadget, but the gadget is not triggered, then the error leaving the gadget is a linear combination of the same Paulis that could result from a one-qubit $X$, $Y$ or $Z$ fault before or after the gadget.  

If the error has no $X$ component, then as a weight-one $Z$ error it commutes to the end of~\eqnref{e:1573roundrobinccz1456789}, at which point $Z$ error correction fixes it.  

If the error has $X$ component of weight one, then the $Z$ component can be a permutation of any of $III, IIZ, IZZ, ZZZ$ on the three involved qubits (or of $II, IZ, ZZ$ for a $\CZ$ gadget).  As we have already argued, these $Z$ errors have distinct $X$ syndromes.  The $X$ error correction immediately following the gadget will catch and correct the error's $X$ component, keeping it from spreading.  The final $Z$ error correction, alerted to the $X$ failure, will correct the error's $Z$ component.

\section{Conclusion}

Space-saving techniques for fault-tolerant quantum computation should be useful both for large-scale quantum computers and for nearer-term fault-tolerance experiments.  Our techniques can likely be optimized further, and adapted to experimental model systems---but it might also be useful to relax the space optimization and allow a few more qubits.  The techniques can also likely be applied to other codes, especially distance-three CSS codes.  The round-robin $\CZ$ and $\CCZ$ constructions apply to some non-CSS codes, such as the $\llbracket 8,3,3 \rrbracket$ code, but then are more difficult to make fault tolerant.

\medskip

We thank Ted Yoder for helpful comments.  Research supported by NSF grant CCF-1254119 and ARO grant W911NF-12-1-0541.

\bibliography{q}

\end{document}